\newcommand\R{\mathbb{R}}
\newcommand\C{\mathbb{C}}
\newcommand\be{\begin{equation}}
\newcommand\ee{\end{equation}}
\newcommand\beq{\begin{equation}}
\newcommand\eeq{\end{equation}}
\newcommand\bed{\begin{definition}}
\newcommand\ed{\end{definition}}
\newcommand\brem{\begin{remark}}
\newcommand\erem{\end{remark}}
\newcommand\benum{\begin{enumerate}}
\newcommand\enum{\end{enumerate}}
\newcommand\beit{\begin{itemize}}
\newcommand\eit{\end{itemize}}
\newcommand\bt{\begin{theorem}}
\newcommand\et{\end{theorem}}
\numberwithin{equation}{section}
\newtheorem{theorem}{Theorem}[section]
\newtheorem{assumption}{Assumption}[section]
\newtheorem{lemma}{Lemma}[section]
\newcommand{\eqnum}{\refstepcounter{equation}\textup{\tagform@{\theequation}}}
\newtheorem{definition}{Definition}[section]
\newtheorem{remark}{Remark}[section]
\begin{document}

\title[Quantum open books]{On open book analogs of quantum graphs}
%\date{today}
%\currenttime
%\author{Setenay Akduman  and Peter Kuchment}

\author{Setenay Akduman}
\address{Department of Mathematics, Izmir Democracy University, T\"{U}RK\.{I}YE} 
\email{setenay.akduman@idu.edu.tr}
\author{Peter Kuchment}
\address{Department of Mathematics, Texas A\& M University, USA}
\email{kuchment@tamu.edu}

\date{\today}
\keywords{Open book structure, quantum graph, Laplace operator}
\subjclass{34B45, 34L40, 34B10, 81Q35}
\maketitle

\begin{abstract}
Quantum graphs have become in this century a favorite playground for mathematicians, mathematical physicists, and chemists, due to their manifold applications as models of thin structures, as well as presenting sometimes simpler playground for hard higher dimensional problems.

It was clear from some applications that thin surface structures (looking as stratified varieties) also arise, for instance in photonic crystals theory and dynamical systems. However, both justification and studying of these models is much harder and very little progress has been made by now. The goal of this note is to set down some basic notions and results for such structures.

The name ``open book'' has been used for such geometric structures in topology and comes from an image of several smooth $n$-dimensional ``pages'' bound to an $(n-1)$- dimensional ``binding.''
\end{abstract}

%%%%%%%%%%%%%%%%%
\section{Introduction}\label{Intro}
%%%%%%%%%%%%%%

The so called metric (or quantum, if equipped with a ``Schr\"odinger''operator) graph~\cite{BerKuc} is just a graph, each edge of which is endowed with a metric that identifies it with an interval of the real line $\R$. In somewhat more pretentious way, metric graphs can be considered as one-dimensional CW complexes with Riemannian metric (or stratified 1D Riemannian varieties). Among the origins of these objects are in particular attempts to approximate very ``thin'' and branching structures (being them circuits of nanometers-thick ``quantum wires,'' or photonic structures, thin waveguides, etc.) by graphs (see, e.g. \cite{AGA,KuPBG,FK4,FK5,BerKuc,RS2,SmilSol,Post} and references therein). Probably the first usages of such models were seen in chemistry \cite{Pauling,RuedS}. Another source was studying dynamical systems and random processes that exhibit slow motion along a 1D structure and fast motion across it. Averaging the fast motions produces a process on a graph \cite{Fr,FW}. Such object have also been used as ``toy'' models for hard problems of quantum chaos (see, e.g. \cite{BerKeatSmi,Smil07,Smil13,KoSm03,KoSm99}). All this has led to a booming area of analysis on quantum graphs.

One cannot help to notice that higher (especially two-) dimensional singular varieties equipped with ``differential equations'' arise as naturally as the quantum graphs. This happens, for instance, in the already mentioned averaging \cite{FW,FWopb}, in the limits of thin ``almost two-dimensional'' structures \cite{CorbinKuch}. And the use of graph models for photonic crystals in \cite{KuPBG,FK5} was somewhat a cheating, since only 2D cylindrical structures and an appropriate polarization were used to reduce the study to a graph cross-section. 
The photonic band-gap structures often look more like thin 2D varieties of dielectric surrounded by air. However, even justifying such surface models as limits of the ``thin'' ones turns out to be a much harder task than in the quantum graph case, with the only limited results known to the authors being those in \cite{FWopb,CorbinKuch}. Studying the models themselves has apparently also been, to the best of our knowledge,  non-existent.

Notice that, as in the quantum graph case, these surface structures in all interesting cases have singularities. Thus, one would want to consider a $n$-dimensional stratified variety equipped with a Riemannian metric. This would allow singularities of arbitrary co-dimension. However, this is a technically much more difficult task, and to the best of our knowledge, so far no development has happened in this direction. Allowing higher co-dimension singularities in analogs of the results of \cite{CorbinKuch} would provide very useful models for the further study. In this paper (as in \cite{CorbinKuch}) we consider only singularity strata that are smooth manifolds of co-dimension one in the variety.

The basic notions concerning open book structures are introduced in Section \ref{S:OpenBooks}, while the operators of interest and quantum open books are described in Section \ref{Operators}. The next Section \ref{S:selfadjoint} contains discussion of the conditions of self-adjointness of the main operator.  The last Section \ref{Comments} is dedicated to some further comments.

%%%%%%%%%%%%%%%%%%%%%
\section{Open book structures}\label{S:OpenBooks}
%%%%%%%%%%%%%%%%%

Open book structures have been used in geometry and topology, e.g. in the higher dimensional knot theory (see\cite{Ranicki,WinkelRan,Winkel}).  We adopt a simplistic definition of an open book as a \textbf{compact stratified $n$-dimensional Riemannian variety, which justifies the name ``open book,'' with smooth $n$-dimensional pages meeting at smooth $(n-1)$-dimensional ``bindings.''} Thus, a \begin{underline}local\end{underline} picture near a binding point looks as in Fig. \ref{F:pages}, which justifies the name ``open book.''
\begin{figure}[ht!]
  \centering
  \includegraphics[width=0.4\textwidth]{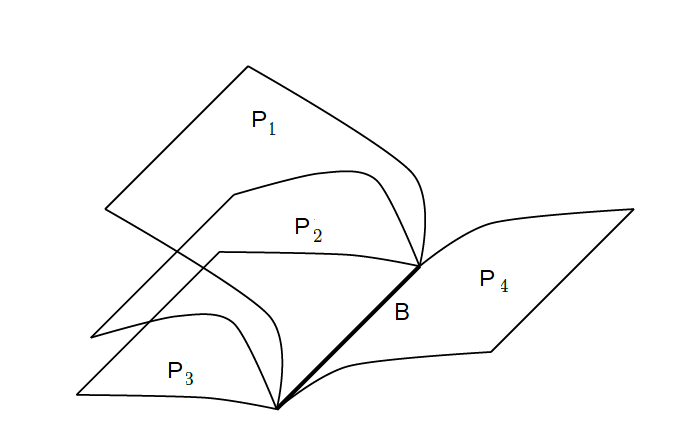}
  \caption{The local structure, explaining the name ``open book.}\label{F:pages}
\end{figure}
Figure \ref{F:3spheres} provides an example of such a variety consisting of pieces of spheres (pages) meeting at a circular boundary (binding).

\begin{definition}
An \textbf{compact metric open book variety} $M$ consists of finitely many ``pages'' $P_1,\dots,P_m$, which are smooth compact $n$-dimensional Riemannian manifolds, the boundary of each of them consisting of one or more from a finite list of  $(n-1)$-dimensional smooth ``bindings'' $B_1,\dots,B_l$. 

We will use the natural terms ``\textbf{a page adjacent to a binding}'' and `\textbf{a binding adjacent to a page},'' if the said binding belongs to the boundary of the said page.

We also denote by $k_j$ the number of pages adjacent to the binding $B_j$.
\end{definition}
For instance, the variety $M$ shown in Fig. \ref{F:3spheres} consists of one circular binding with six spherical caps as pages. One can note that globally, the ``pages,'' as well as the ``bindings'' can contain components of different topology, as in the ``dumbbell'' shown in Fig. \ref{F:dumbell}, consisting of spherical cap pages and a cylindrical one. 

\brem
\begin{figure}[ht!]
  \centering
  \includegraphics[width=0.4\textwidth]{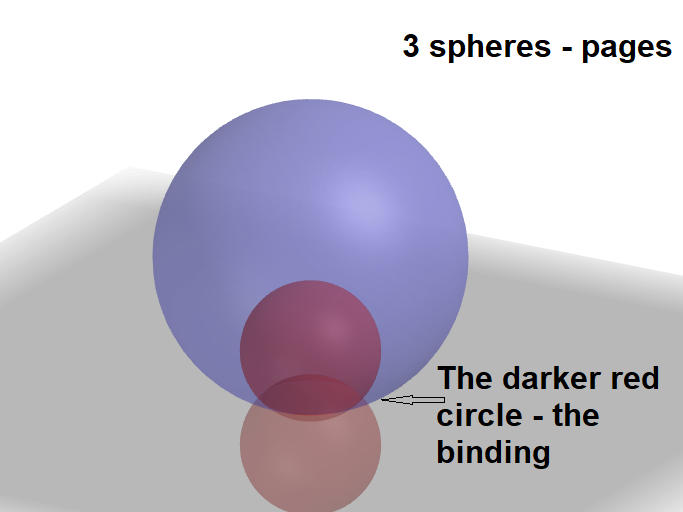}
  \caption{Three spherical ``pages'' (in fact, six pieces of them) connected at a circular ``binding.''}\label{F:3spheres}
\end{figure}

 \begin{figure}[ht!]
  \centering
  \includegraphics[width=0.4\textwidth]{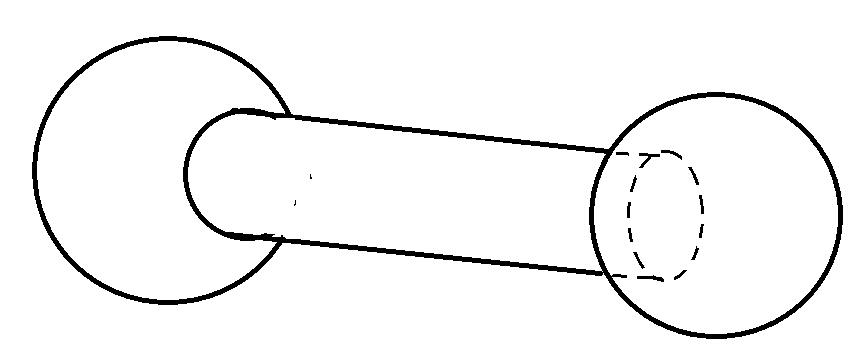}
  \caption{A ``dumbbell'' open book structure with two circular bindings, one cylindrical, and two spherical cap pages.}\label{F:dumbell}
\end{figure}
Notice that lower dimensional (i.e., of dimension $(n-2)$ and less) smooth strata are not allowed. The reason is that, as in \cite{CorbinKuch}, we have not been able to deal with this much more technically challenging case. One can see that small perturbations of centers/radii of spheres in Fig. \ref{F:3spheres} would produce such zero-dimensional strata.

We also note that our definition of an open book structure is somewhat less restrictive than the one in topology\footnote{
One can notice a (superficial) resemblance to the Bruhat-Tits buildings \cite{Build_ronan,Bruhat-Tits,Build_Remy}.} (see, e.g., \cite{Winkel}).
\erem

%%%%%%%%%%%%
\section{Operators on open books. ``Quantum Open Books''}\label{Operators}
%%%%%%%%%%%%%%%%%

\begin{definition}Let $M$ be a compact metric open book variety.
\begin{itemize}
\item The variety $M$ is equipped with the natural measure arising from the Riemannian metric on each page. This allows one to introduce the space 
    \be
    L_2(M):=\bigoplus\limits_i L_2(P_i).
    \ee
     of square integrable functions on $M$.
\item	We denote by $\widetilde{H}^2(M)$ the space 
\be \widetilde{H}^2(M):=\bigoplus\limits_i H^2(P_i)
\ee
 consisting of the functions $u$  on $M$ that on each page $P_i$ belong to the Sobolev space $H^2(P_i)$ with the norm
	$$||u||^2_{\widetilde{H}^2(M)}:=\sum\limits_{i}||u||^2_{H^2(P_i)}.$$
{\bf Notice that no compatibility at the bindings is required}.
\item Analogously, Sobolev function spaces can be defined on each binding $B$.
\item A {\bf quantum open book structure} is a metric open book  $M$ equipped with a differential operator $\mathcal H$ on pages and ``appropriate'' junction conditions across each binding. 
\end{itemize}
\end{definition}

Although one can consider more general elliptic operators $\mathcal H$, in this work we will assume that it acts as the Laplace-Beltrami operator $-\Delta_P$ on each page $P$ (the negative sign is used, as is common, to make the operator positive) : 
\be 
\mathcal H=\bigoplus\limits_P (-\Delta_P),
\ee
with the domain $\widetilde{H}^2(M)$. In what follows, we will drop the subscript $P$ in $-\Delta_P$, whenever this does not lead to confusion.

The following statement is obvious:
\begin{lemma}\label{L:bounded}
The operator $\mathcal H$ is bounded as an operator from the space $\widetilde{H}^2(M)$ to $L_2(M)$.  
\end{lemma}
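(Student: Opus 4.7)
The plan is to reduce the global estimate on $M$ to the page-by-page estimate that is already standard for smooth compact Riemannian manifolds with boundary. Concretely, I would argue as follows.

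First, fix a page $P_i$. Since $P_i$ is a smooth compact Riemannian manifold (with boundary consisting of bindings), the Laplace--Beltrami operator in local coordinates has the form $-\Delta_{P_i}u = -\tfrac{1}{\sqrt{g}}\partial_a(\sqrt{g}\,g^{ab}\partial_b u)$ with smooth, bounded coefficients and a metric $g$ bounded from above and away from zero by compactness. A direct computation, or alternatively a standard interpolation/coordinate-patch argument, then gives a constant $C_i>0$ such that
\[
\|-\Delta_{P_i} u\|_{L_2(P_i)} \le C_i \, \|u\|_{H^2(P_i)} \qquad \text{for all } u \in H^2(P_i).
\]
No boundary condition is involved, since $H^2(P_i)$ here is the full Sobolev space on the page.

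Second, I would assemble these page-wise estimates. Because there are only finitely many pages, set $C := \max_i C_i$. For any $u \in \widetilde{H}^2(M)$, writing $u_i := u|_{P_i}$ and using the definitions of $\mathcal{H}$ and of the norms on $L_2(M)$ and $\widetilde{H}^2(M)$, one obtains
\[
\|\mathcal{H} u\|_{L_2(M)}^2 \;=\; \sum_i \|-\Delta_{P_i} u_i\|_{L_2(P_i)}^2 \;\le\; \sum_i C_i^2 \|u_i\|_{H^2(P_i)}^2 \;\le\; C^2 \|u\|_{\widetilde{H}^2(M)}^2.
\]
This is exactly the required boundedness.

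There is really no substantive obstacle here; the only point worth being careful about is that $\widetilde{H}^2(M)$ imposes no matching conditions at the bindings, so the argument is genuinely a direct sum of page-wise estimates and one does not need any trace or compatibility input. The fact that the definition puts full $H^2(P_i)$ on each page (not some subspace determined by junction conditions) is what makes the lemma immediate; junction conditions will enter later, when self-adjointness of a restriction of $\mathcal{H}$ is discussed in Section~\ref{S:selfadjoint}. \qed
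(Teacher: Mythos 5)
Your argument is correct and is precisely the routine page-by-page estimate that the paper has in mind when it declares this lemma ``obvious'' without proof: $-\Delta_{P_i}$ is bounded from $H^2(P_i)$ to $L_2(P_i)$ on each compact page, and the finite direct sum structure of $\widetilde{H}^2(M)$ and $L_2(M)$ (with no compatibility conditions at the bindings) immediately yields the global bound. Your closing remark correctly identifies why the absence of junction conditions in $\widetilde{H}^2(M)$ is what makes the statement immediate.
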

It is clear that this operator has a huge kernel, unless appropriate (elliptic) \textbf{boundary} (or rather \textbf{junction}) \textbf{conditions} are imposed at each binding.

%%%%%%%%%%%%%
\subsection{Binding (junction) conditions}\label{SS:junction}
%%%%%%%%%%%%%
At each binding $B$, one can define the ``normal derivative'' operator $\dfrac{\partial}{\partial \nu}$ that produces a vector of normal to $B$ derivatives of functions from all the adjacent pages. Namely, if $B$ is adjacent to pages $P_1,\dots,P_k$, then having a function $u$ on $M$ that belongs to $\widetilde{H}^2(M)$, one can define the $n$-vector of functions with components in $H^{1/2}(B)$ as follows:
\be
\frac{\partial u}{\partial \nu}:=\left\{\frac{\partial u_{P_1}}{\partial \nu}, \dots, \frac{\partial u_{P_k}}{\partial \nu}\right\}.
\ee
We denote by $u|_B$ the $n$-vector of restrictions of the function $u$ from all adjacent pages: 
\be
u|_B:=\{(u|_{P_1})|_B,\dots, (u|_{P_k})|_B\}.
\ee
Let us now have smooth $k\times k$ matrix functions $A(x), C(x)$ on $B$ and define the desired \textbf{junction (= binding) conditions} as follows\footnote{We skipped from the letter $A$ to $C$ in these matrix notations, not to mix them up with a binding $B$.}:
\be\label{E:junction}
A(x)u(x)|_B +C(x)\frac{\partial u(x)}{\partial \nu}=0 \mbox{ for all } x\in B.
\ee
\begin{assumption} \label{A:bv}
We will assume that these smooth matrix functions are defined on each binding (note that the size of these matrices is equal to the number $k$ of pages adjacent to the binding, and thus can be different for different bindings in the same open book).
\end{assumption}

\begin{definition}\label{D:S-L}
We impose the following \textbf{ellipticity condition}:

For any binding $B$, the corresponding matrices $A$ and $C$, any $x\in B$, and any $\lambda>0$ the following condition holds:
\be\label{E:ellipt}
\det (A(x)-\lambda C(x))\neq 0.
\ee
\end{definition}
\begin{remark}\label{R:maxrank}
Notice, that this implies in particular that the $k\times 2k$ matrix $(A,C)$ has (the maximal possible) rank $k$.
\end{remark}
We now incorporate the conditions into defining our final choice of the space.
\begin{definition}\label{D:finalspace}
We denote by $H^2_0(M)$ (or, in more detailed notations $H^2_{A,C}(M)$) as the subspace of 
$\widetilde{H}^2(M)$ that consists of functions satisfying the binding conditions (\ref{E:junction}) for all bindings $B$ and any $x\in B$.
\end{definition}
We now define our final operator of interest as the restriction of the original $\mathcal{H}$ to the just defined subspace:
\begin{definition}
    \be
    \mathcal{H}_0:=\mathcal{H}|_{H^2_0(M)}.
    \ee
\end{definition}
Our first result is the following:
\begin{theorem}
The operator
\be
\mathcal{H}_0: H^2_0(M)\mapsto L_2(M)
\ee
is bounded and Fredholm\footnote{The Shapiro-Lopatinskyi conditions are sufficient for Fredholmity, but might not be necessary. Thus, a more general (although implicit) condition will be used in Section \ref{Spectrum}.}.
\end{theorem}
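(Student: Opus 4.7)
The plan has two parts. Boundedness is immediate: $H^2_0(M)$ is the kernel in $\widetilde{H}^2(M)$ of the ``junction trace'' map $u \mapsto (A\, u|_B + C\, \partial_\nu u|_B)_B$, whose boundedness into $\bigoplus_j H^{1/2}(B_j;\C^{k_j})$ follows from the standard trace theorems $H^2(P_i) \to H^{3/2}(\partial P_i) \oplus H^{1/2}(\partial P_i)$ together with smoothness of $A$ and $C$ (Assumption \ref{A:bv}). Hence $H^2_0(M)$ is closed in $\widetilde{H}^2(M)$, and restricting the bounded operator $\cH$ from Lemma \ref{L:bounded} produces a bounded $\cH_0$.

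For Fredholmity, the natural framework is that of elliptic boundary value problems of Agmon--Douglis--Nirenberg (ADN) type, extended to transmission-type systems. View the pages as a disjoint union $\widetilde{M} := \bigsqcup_i P_i$, a compact Riemannian manifold with boundary, on which $\bigoplus_i (-\Delta_{P_i})$ is a uniformly elliptic second-order operator. The boundary $\partial \widetilde{M}$ consists of several disjoint copies of each binding $B_j$, and the junction conditions (\ref{E:junction}) furnish, at each $B_j$, a system of $k_j$ boundary conditions of order at most one coupling the Cauchy data from the $k_j$ adjacent pages; the rank condition in Remark \ref{R:maxrank} ensures these are non-redundant. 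By the classical Fredholm theorem for such systems it then suffices to verify (a) ellipticity of $-\Delta$ (immediate) and (b) the Shapiro--Lopatinskii (coercive cover) condition at each binding.

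The key step is that Shapiro--Lopatinskii reduces precisely to (\ref{E:ellipt}). Fix a binding $B$, a point $x_0 \in B$, and a nonzero tangential covector $\xi$; flattening the geometry and freezing coefficients, the SL test becomes: on each of the $k$ half-lines $\{t \geq 0\}$ (one per adjacent page) solve $-u_i''(t) + |\xi|^2 u_i(t) = 0$ with $u_i$ bounded as $t \to \infty$, subject to the frozen condition $A(x_0)(u_i(0))_i + C(x_0)(\partial_\nu u_i|_{t=0})_i = 0$. The bounded solutions are $u_i(t) = c_i e^{-|\xi|t}$, which (with the sign convention for $\partial_\nu$ matching that of the paper) give $\partial_\nu u_i|_B = -|\xi|\, c_i$, so the junction condition collapses to $(A(x_0) - |\xi|\, C(x_0))\, c = 0$. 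Setting $\lambda := |\xi| > 0$, condition (\ref{E:ellipt}) forces $c = 0$, which is exactly Shapiro--Lopatinskii.

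With ellipticity and Shapiro--Lopatinskii in hand, the ADN a priori estimate together with compactness of $M$ yields that $\cH_0 : H^2_0(M) \to L_2(M)$ is Fredholm via a parametrix with compact remainder. The main obstacle is not the SL calculation itself but the bookkeeping needed to recast the junction structure as a standard ADN boundary value problem on $\widetilde M$---in particular, tracking how the effective order of the boundary operator can drop at points where $C(x_0)$ is singular, and checking that the ADN framework (or its mixed-order extension) still applies under Assumption \ref{A:bv} and condition (\ref{E:ellipt}). Once this reformulation is in place, every remaining step follows standard patterns.
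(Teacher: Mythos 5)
Your proposal follows essentially the same route as the paper: boundedness by restriction of the bounded operator $\mathcal{H}$ to the closed subspace $H^2_0(M)$, and Fredholmity via interior ellipticity of the Laplace--Beltrami operator plus verification that condition (\ref{E:ellipt}) is exactly the Shapiro--Lopatinskii covering condition at each binding, followed by the standard parametrix/partition-of-unity argument. The only difference is that you carry out the half-line ODE computation reducing Shapiro--Lopatinskii to $\det(A-\lambda C)\neq 0$ explicitly, whereas the paper merely asserts it with references to \cite{Agran,LionsMagenes}.
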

\begin{proof}
As a restriction of a bounded operator, the operator is bounded.  

The Fredholm property follows by usual techniques from ellipticity, which can be checked locally. 
Indeed, inside the pages the operator Laplace-Beltrami is clearly elliptic. 
Near a point $x_0$ of a binding $B$, one checks that the condition (\ref{E:ellipt}) provides the so called \textbf{covering}, or \textbf{Shapiro-Lopatinskyi conditions}, which imply the ellipticity near $x_0$. The arising local parametrices can be combined in the usual way using partitions of unity, which provides the Fredholm property of the boundary value problem operator (see, e.g. \cite{Agran,LionsMagenes}).
\end{proof}

%%%%%%%%%%%%%%
\section{Selfadjointness}\label{S:selfadjoint}
%%%%%%%%%%%
Here we are interested in establishing conditions of self-adjointness of the quantum open book Hamiltonian $\mathcal{H}_0$.
\begin{theorem}\label{self-adjointness}
The following conditions are equivalent:
 \begin{enumerate}
 \item Operator $\mathcal H_0$ is self-adjoint as an unbounded operator in $L_2(M)$ with the domain $H^2_0(M)$.
 \item The  matrix $A(x)C^*(x)$ is self-adjoint for any $x\in B$ and arbitrary binding $B$. 
\item For each binding $B$ that has $k$ pages attached, there exists a unitary $k\times k$ smooth matrix function $\mathcal{U}(x)$ such that the condition (\ref{E:junction}) is equivalent to the following one:
\begin{equation}\label{condition_b}
i(\mathcal{U}(x)-\mathcal{I})u|_B(x)+(\mathcal{U}(x)+\mathcal{I})\frac{\partial u(x)}{\partial \nu}|_B=0,
\end{equation}
where $\mathcal{I}$ is the $k\times k$ identity matrix.
\end{enumerate}
 \end{theorem}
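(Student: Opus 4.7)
The plan is to prove $(1)\Leftrightarrow(2)$ via Green's second identity and a pointwise Lagrangian-subspace analysis on each binding, and $(2)\Leftrightarrow(3)$ via a pointwise matrix Cayley transform.

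For the first equivalence, applying Green's identity on each page $P_j$ and regrouping the boundary integrals by binding yields, for $u,v\in\widetilde H^2(M)$,
\begin{equation*}
\langle\mathcal H u,v\rangle_{L_2(M)}-\langle u,\mathcal H v\rangle_{L_2(M)}=\sum_B\int_B\Bigl(\bigl\langle u|_B,\tfrac{\partial v}{\partial\nu}\bigr\rangle_{\mathbb C^{k_B}}-\bigl\langle\tfrac{\partial u}{\partial\nu},v|_B\bigr\rangle_{\mathbb C^{k_B}}\Bigr)\,dS,
\end{equation*}
with $k_B$ the number of pages adjacent to $B$. Thus $\mathcal H_0$ is symmetric precisely when, at every binding $B$ and every $x\in B$, the $k_B$-dimensional subspace $L_x:=\ker(A(x),C(x))\subset\mathbb C^{k_B}\oplus\mathbb C^{k_B}$ is Lagrangian for the symplectic form $\omega((a,b),(c,d))=\langle a,d\rangle-\langle b,c\rangle$. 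Using $L_x^\perp=\mathrm{range}(A,C)^*$ and a dimension count, $L_x$ is Lagrangian iff $JL_x=L_x^\perp$, which upon substitution reduces pointwise to $(CA^*-AC^*)\eta=0$ for every $\eta\in\mathbb C^{k_B}$, i.e., exactly condition (2). To upgrade symmetry to full self-adjointness I invoke the Fredholm property of $\mathcal H_0$ from the previous theorem together with elliptic boundary regularity: any $v\in D(\mathcal H_0^*)$ lies in $\widetilde H^2(M)$, and the Green identity forces its boundary data into the symplectic annihilator of $L_x$, which coincides with $L_x$ itself under (2), so $v\in H^2_0(M)$ and $\mathcal H_0^*=\mathcal H_0$.

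For $(2)\Leftrightarrow(3)$ the argument is pointwise on each binding. Assuming (2) and maximal rank, I first show $A(x)\pm iC(x)$ are invertible: if $(A-iC)\xi=0$ then $(\xi,-i\xi)\in L_x=\mathrm{range}\binom{C^*}{-A^*}$ (this equality being the easy direction of the Lagrangian characterization under (2) and maximal rank), giving $\xi=C^*\eta$ and $A^*\eta=iC^*\eta$ for some $\eta$; evaluating $\eta^*AC^*\eta$ two ways via $AC^*=CA^*$ forces $\|C^*\eta\|^2=0$, whence $\xi=0$. I then define
\begin{equation*}
\mathcal U(x):=-(A(x)-iC(x))^{-1}(A(x)+iC(x)).
\end{equation*}
The identity $(A\pm iC)(A^*\mp iC^*)=AA^*+CC^*$, immediate from $AC^*=CA^*$, yields $\mathcal U\mathcal U^*=\mathcal I$, while the elementary relations $i(\mathcal U-\mathcal I)=-2i(A-iC)^{-1}A$ and $\mathcal U+\mathcal I=-2i(A-iC)^{-1}C$ show that (\ref{condition_b}) differs from (\ref{E:junction}) by the invertible left factor $-2i(A-iC)^{-1}$, so both conditions define the same subspace of boundary data. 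Smoothness of $\mathcal U(x)$ follows from smoothness of the Cayley transform. Conversely, given a smooth unitary $\mathcal U$ and setting $A:=i(\mathcal U-\mathcal I)$, $C:=\mathcal U+\mathcal I$, one has $AC^*=i(\mathcal U-\mathcal U^*)$, which is self-adjoint, and maximal rank follows from $\ker(A^*)\cap\ker(C^*)=\{\eta:\mathcal U^*\eta=\eta=-\eta\}=\{0\}$.

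The main technical obstacle is the upgrade from symmetry to self-adjointness: the domain $D(\mathcal H_0^*)$ must be identified explicitly, which requires elliptic regularity for the boundary value problem up to (and including) the bindings, not merely interior regularity on each page. The remaining ingredients are pointwise finite-dimensional linear algebra and the Cayley transform computation, both of which collapse entirely once $AC^*=CA^*$ is used to kill the cross terms in $(A\pm iC)(A^*\mp iC^*)$.
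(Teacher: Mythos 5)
Your proposal is correct and follows the same overall architecture as the paper's proof: $(1)\Leftrightarrow(2)$ via Green's identity and analysis of the boundary form, and $(2)\Leftrightarrow(3)$ via the Cayley transform $\mathcal U=-(A-iC)^{-1}(A+iC)$, which is exactly the paper's $\sigma(-1,x)$. The differences are in execution, and each is worth noting. First, where the paper proves $(1)\Rightarrow(2)$ by exhibiting the specific test data $\bigl(v|_B,\tfrac{\partial v}{\partial\nu}|_B\bigr)=(-C^*h,A^*h)$ and then \emph{skips} $(2)\Rightarrow(1)$ entirely (``by reversal of the construction''), you recast both directions as the single statement that $\ker(A,C)$ is a Lagrangian subspace of the boundary symplectic form; the paper's test data is precisely a parametrization of $J^{-1}L_x^{\perp}$, so the two arguments coincide, but your framing makes the converse direction explicit and correctly isolates its real technical content, namely elliptic regularity up to the bindings to place $D(\mathcal H_0^*)$ inside $\widetilde H^2(M)$ --- a step that you, like the paper, assert rather than prove, but that you at least name as the load-bearing ingredient. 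Second, for $(2)\Rightarrow(3)$ the paper cites Lemma 1.4.7 of the quantum-graph monograph for the invertibility of $A+izC$ and the unitarity of $\sigma(z,x)$, whereas you prove the $z=-1$ instance directly: your kernel argument for $A-iC$ (forcing $\|C^*\eta\|=0$ from $AC^*=CA^*$) and the identity $(A\pm iC)(A^*\mp iC^*)=AA^*+CC^*$ are a clean, self-contained replacement for the citation. The $(3)\Rightarrow(2)$ computation is identical to the paper's. One shared soft spot: in $(3)\Rightarrow(2)$ both you and the paper silently pass from ``the two conditions are equivalent'' to ``$A=i(\mathcal U-\mathcal I)$, $C=\mathcal U+\mathcal I$''; your Lagrangian characterization actually justifies this, since self-adjointness of $AC^*$ (given maximal rank) is a property of the subspace $\ker(A,C)$ alone and hence transfers between equivalent representations --- it would be worth saying so explicitly.
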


\begin{proof} We will prove this in the following order: 
$$
 (1) \Rightarrow (2) \Rightarrow (3)  \Rightarrow (2) \Rightarrow (1). $$ 
\begin{itemize}

\item  $(1) \Rightarrow (2).$ Let us assume that the operator $\mathcal{H}_0$ is self-adjoint. We need to show that the matrix $A(x)C^*(x)$ is self-adjoint for any $x\in B$. 
    
    Choose a function $u \in D(\mathcal{H}_0)=H^2_0(M)$. Let also $B$ be a binding and $x_0\in B$. Consider the expression 
\be
<\mathcal{H}_0u,v>-<u,\mathcal{H}_0v>,
\ee
where the brackets denote the hermitian inner product in $L_2(M)$. After integration by parts over the pages, this boils down, as usual,  to the expression
\be\label{E:parts}
\sum_i \int\limits_{B_i} (\dfrac{\partial u}{\partial \nu}|_B\cdot v|_B -u|_B\cdot \dfrac{\partial v}{\partial \nu}|_B),
\ee
where $a\cdot b=\sum a_j\overline{b_j}$ is the sesqui-linear inner product in $\C^n$.

Self-adjointness means that if for some $v$ this expression vanishes for all $u$ in the domain of $\mathcal{H}_0$, then $v$ should be also in the domain and thus also satisfy the junction condition (\ref{E:junction}).

To prove that this is indeed true, we choose an arbitrary function $h(x)$ on $M$ supported near $x_0\in B$ and such that it is smooth up to a boundary on each page adjacent to $B$ (no conditions at the binding are required). The values on $B$ of a function $v\in \widetilde{H}^2(M)$ and of its normal derivative $\dfrac{\partial v}{\partial \nu}$ can be chosen independently, so we can choose such a function that 
\be
v(x)|_B=-C(x)^*h(x)
\ee 
and 
\be
\frac{\partial v (x)}{\partial \nu}|_B=A(x)^*h(x).
\ee
 Plugging these $u$ and $v$ into (\ref{E:parts}), integrating by parts, and using (\ref{E:junction}), one concludes that the expression in  (\ref{E:parts}) vanishes. This means that $v$ is in the domain of $\mathcal{H}_0^*$, and due to self-adjointness it also belongs to the domain of  $\mathcal{H}_0$. Hence, it satisfies (\ref{E:junction}), which implies the following:
\be
\int_B (-AC^*+CA^*)h=0.
\ee
Since the function $h$ on $B$ was arbitrary, we conclude that $AC^*$ is self-adjoint.

\item $(2) \Rightarrow (3)%%
$. We start with the following auxiliary result.
\begin{lemma}\label{L:unitary}
	Let matrices $A(x)$ and $C(x)$ be such that 
 \begin{enumerate}
     \item the $k\times 2k$ matrix $(A,C)$ has the (maximal possible) rank $k$
     and 
     \item the matrix $AC^*$ is self-adjoint. 
 \end{enumerate}
 Then, for any real $z\neq 0$ and $x\in B$, the matrix $A(x)+izC(x)$ is invertible and the matrix 
 \be
 \sigma(z,x):=-({A(x)+izC(x)})^{-1}(A(x)-izC(x))
 \ee 
 is unitary.
\end{lemma}
This lemma is proven in \cite[Lemma 1.4.7]{BerKuc} for the case of constant (i.e., independent on $x$) matrices $A$ and $C$.
The corresponding proof applies without any change to our case.

Let us return to the proof of the implication $(2) \Rightarrow (3)$.

So, let $A(x)C^*(x)$ be self-adjoint for any $x\in B$. Then matrices $A(x)$ and $C(x)$ satisfy the assumptions of Lemma \ref{L:unitary}, and hence the matrix 
\be
-2i({A(x)+izC(x)})
\ee
 is invertible.  Then we get
\be
\begin{array}{l}
-2i({A(x)+izC(x)})^{-1}A(x)=\\ -2i({A(x)+izC(x)})^{-1}\cdot\frac{1}{2}\;((A(x)-izC(x))+(A(x)+izC(x))\\= i(\sigma(z,x)-\mathcal{I}).
\end{array}
\ee
Similarly,
\be
\begin{array}{l}
-2i({A(x)+izC(x)})^{-1}C(x)=\\ -2i({A(x)+izC(x)})^{-1}\cdot\frac{1}{2iz}((A(x)+izC(x))-(A(x)-izC(x))\\= -\frac{1}{z}(\mathcal{I}+\sigma(z,x)).
\end{array}
\ee
Using these formulas, after multiplying equation (\ref{E:junction}) on the left by the matrix $-2i({A(x)+izC(x)})^{-1}$, we get for all real $z\neq 0$
$$ i(\sigma(z,x)-\mathcal{I})u_C(x)-\frac{1}{z}(\mathcal{I}+\sigma(z,x))\dfrac{\partial u_C(x)}{\partial\nu}=0,$$

Setting $z=-1$ and denoting 
$$\mathcal{U}(x):=\sigma(-1,x)={\sigma(1,x)}^{-1},$$
 we obtain the following condition
$$i(\mathcal{U}(x)-\mathcal{I})u_C(x)+(\mathcal{U}(x)+\mathcal{I})\dfrac{\partial u_C(x)}{\partial\nu}=0.$$
It is exactly of the form  (\ref{condition_b}).
\item The implication $(3) \Rightarrow (2)$ follows by a direct computation. Indeed, since   
\be
    A=i(\mathcal{U}(x)-\mathcal{I}) \mbox{ and }C=(\mathcal{U}(x)+\mathcal{I}),
\ee 
    we have
\be
\begin{array}{l}
A(x)C^*(x)= i(\mathcal{U}(x)-\mathcal{I})(\mathcal{U}^*(x)+\mathcal{I})\\ \\= i(\mathcal{U}(x)-\mathcal{I})(\mathcal{U}^{-1}(x)+\mathcal{I})= i(\mathcal{U}(x)-\mathcal{U}^{-1}(x))
\end{array}
\ee
and
\be
\begin{array}{l}
(A(x)C^*(x))^*= C(x)A^*(x)= (\mathcal{U}(x)+\mathcal{I})(\mathcal{U}^*(x)-\mathcal{I})(-i)\\
\\
= i(\mathcal{U}(x)+\mathcal{I})(\mathcal{I}-\mathcal{U}^*(x))= i(\mathcal{U}(x)-\mathcal{U}^{-1}(x)).
\end{array}
\ee

\item We skip the proof of the implication $(2) \Rightarrow (1)$, since it follows by reversal of the construction used to prove converse statement $(1) \Rightarrow (2)$.

\end{itemize}
\end{proof}
\begin{remark}\label{R:non-uniqueness}\indent
\begin{enumerate}
\item 	Condition (\ref{condition_b}) is a special case of the condition of type (\ref{E:junction}). The latter is easier to check, but its drawback is that the choice of $A(x)$ and $C(x)$ is not unique. For instance, multiplying the two matrices on the left by an invertible matrix(-function) provides an equivalent condition. For instance, for any $z\neq 0$, the following matrices
$$A(z,x)=i(\sigma(z,x)-\mathcal{I})$$
and
$$C(z,x)=-\frac{1}{z}(\sigma(z,x)+\mathcal{I})$$
satisfy the maximal rank condition, as well as  self-adjointness of $AC^*$, and provide equivalent binding conditions. 

\item On the other hand, the matrix $\mathcal{U}(x)$ in (\ref{condition_b}) is defined uniquely (see Theorem \ref{T:unique} below) .

\item In the case of quantum graphs, another very useful representation of the binding conditions is available, namely splitting into the direct sum of the Dirichlet, Neumann, and Robin type conditions (see part C of Theorem 1.4.4 in \cite{BerKuc}). Regretfully, an analog of such splitting for the quantum open books case, for topological reasons does not necessarily hold continuously with respect to $x\in B$.  
\end{enumerate}
\end{remark}
\bt\label{T:unique}
The matrix $\mathcal{U}(x)$ in (\ref{condition_b}) is defined uniquely.
\et
\begin{proof}
Let us assume that there exist two unitary matrices $\mathcal{U}(x)$ and $\mathcal{V}(x)$ such that they describe the same space of boundary values $\left(u|_B,\dfrac{\partial u}{\partial \nu}|_B\right)$ satisfying 
\begin{equation}\label{$U(x)$}
	i(\mathcal{U}(x)-\mathcal{I})u|_B+(\mathcal{U}(x)+\mathcal{I})\frac{\partial u}{\partial \nu}|_B=0,
\end{equation}
and 
\begin{equation}\label{$V(x)$}
	i(\mathcal{V}(x)-\mathcal{I})u|_B+(\mathcal{V}(x)+\mathcal{I})\frac{\partial u}{\partial \nu}|_B=0.
\end{equation}

We will show that the eigenspaces of the unitary matrices $\mathcal{U}$ and $\mathcal{V}$ are the same for any eigenvalue $\lambda$. This would imply that the matrices are the same.

Thus, let $u$ be an eigenvector of $\mathcal{U}$ corresponding to the eigenvalue $\lambda$.
Let $u$ be such that  $\mathcal{U}(x)u=\lambda u$. We need to show that $u$ belongs to the eigenspace of $\mathcal{V}(x)$  for this  $\lambda$ as well. 
To do that, let us take the following ``boundary values'' pair: 
\be
((\lambda+1)u,-i(\lambda-1)u).
\ee 
By inspection, one sees that it satisfies \eqref{$U(x)$}. So, by assumption, it must also satisfy the condition \eqref{$V(x)$}. Then we get
\begin{align*}
0=& i(\mathcal{V}(x)-\mathcal{I})(\lambda+1)u+(\mathcal{V}(x)+\mathcal{I}) (-i(\lambda-1)u)\\=&2i\mathcal{V}u-2iv,
\end{align*}
which implies
$\mathcal{V}(x) u=\lambda u,$ and thus $u$ is an eigenvector of $\mathcal{V}(x)$  for the eigenvalue  $\lambda$. Reversing the roles of $\mathcal{U}$ and $\mathcal{V}$, we will obtain the opposite inclusion and thus the coincidence of the eigenspaces.  
 Since the eigenspaces of the unitary matrices $\mathcal{U}(x)$ and $\mathcal{V}(x)$ are the same for any eigenvalue, the matrices coincide. This proves the uniqueness of  $\mathcal{U}(x)$.
\end{proof}

%%%%%%%%%%%%%%%%%%%%% 
\section{Comments and acknowledgments}\label{Comments}
\begin{enumerate}
\item Analyticity of the ``dispersion relation.'' When the binding conditions (i.e., matrices $(A_i(x),B_i(x))$) move through an appropriate manifold of objects, one is interested in how the spectrum of operator changes. Even in the quantum graph case, when one approaches some special (i.e., Dirichlet) types of the vertex conditions, some eigenvalues can disappear at infinity, being otherwise analytic (see, e.g. \cite{BerKuc}). However, it was shown in \cite{KuchZhao} that if the vertex conditions are represented as points of an appropriate Grassmanian $G$, then the ``dispersion relation'', i.e. the graph of the spectrum as a function of the point of $G$ is an analytic set.
    
    One would like to have a similar result for the open books, at least of the limited class that is considered here and $n=2$. 
\item It is definitely very interesting to treat the case of the whole ladder of strata of dimension from zero to $n$, at least for $n=2$. Our assumption of presence of only strata of dimension $n$ and $n-1$ is non-generic and very restrictive. However, removing it seems to be a significant technical challenge, probably requiring one dealing with PDEs in non-smooth domains ``on steroids.'' The same applies for justification of the model in the spirit of \cite{CorbinKuch}.
\end{enumerate}
The work of the first author was supported by Scientific and Technological Research Council of T\"{u}rkiye(T\"{U}B\.{I}TAK) - 2219 International Postdoctoral Research Fellowship Programme and by the Texas A\&M University.
The work of the second author was supported by the NSF DMS Grant 2007408.
%%%%%%%%%%%%%%%%%
\bibliography{Qopenbook}
\bibliographystyle{plain}
\end{document}